\documentclass[letterpaper, 10 pt, conference]{ieeeconf}  

\usepackage{amsmath}
\usepackage{amsfonts}
\usepackage{cite} 
\usepackage{subcaption}
\usepackage{graphicx}
\usepackage{siunitx}

\newtheorem{theorem}{Theorem}

\IEEEoverridecommandlockouts                              

\title{\LARGE \bf
Robust Adaptive Sliding-Mode Control for Damaged Fixed-Wing UAVs
}

\author{Mark Spiller, Lennart Kracke, and Johannes Autenrieb
\thanks{The authors are with the
	German Aerospace Center (DLR), Institute of Flight Systems, 38108, Braunschweig, Germany. {\tt\small mark.spiller@dlr.de, lennart.kracke@dlr.de,
	johannes.autenrieb@dlr.de}}%
}

\begin{document}

\maketitle
\thispagestyle{empty}
\pagestyle{empty}

\begin{abstract}

Many unmanned aerial vehicles (UAVs) can remain aerodynamically flyable after sustaining structural or control surface damage, yet insufficient robustness in conventional autopilots often leads to mission failure. This paper proposes a robust adaptive sliding mode controller (RASMC) for fixed-wing UAVs subject to aerodynamic coefficient perturbations and partial loss of control surface effectiveness. A damage-aware flight dynamics model is developed to systematically analyze the impact of such impairments on the closed-loop behavior. The RASMC is designed to ensure reliable tracking and stabilization, while a gain adaptation law maintains low control effort under nominal conditions and increases the gains as needed in the presence of aerodynamic damage. Lyapunov-based stability guarantees are derived, and assumptions on admissible uncertainty bounds are formulated to characterize the limits within which closed-loop stability and performance can be ensured. 
The proposed controller is implemented within an existing UAV autopilot framework, where outer-loop guidance and speed control modules provide reference commands to the RASMC for attitude stabilization. Simulations demonstrate that, despite significant damage, all closed-loop states remain stable with bounded tracking errors.
\end{abstract}

\section{Introduction}
\label{sec:introduction}
Structural damage or partial control surface degradation does not necessarily imply loss of aerodynamic controllability in fixed-wing unmanned aerial vehicles (UAVs). In many cases, the airframe remains physically capable of sustained flight despite significant impairment. Yet, conventional flight control systems are predominantly designed for nominal operating conditions and may lack the robustness required to accommodate severe aerodynamic parameter variations or actuator effectiveness loss. Consequently, vehicles that are aerodynamically survivable can nevertheless become unstable or exhibit substantial tracking degradation. Bridging this gap between physical survivability and control robustness represents a fundamental challenge in the development of damage-tolerant flight control systems.

Various approaches have been proposed to achieve stabilization and control of fixed-wing UAVs under conditions of damage \cite{castaneda2017extended,
	wang2019incremental,
	wang2019incremental2,
	bao2021design,
	zhao2014fault,
	he2020reconfigurable,
	kim2013flight,
	chowdhary2013guidance,
	du2025fault,
	nguyen2008flight,
	liu2010modeling,
	guo2011multivariable,
	nguyen2006dynamics,
	xiao2016modified,
	kim2019flight
	}. 
Most of these approaches fall into the categories of robust control methods, which aim to ensure stability under a wide range of uncertainties, adaptive control methods, which adjust to changing system dynamics in real-time, or robust adaptive approaches, which combine the benefits of both domains. 

Robust control methods are designed to provide stability despite uncertainties such as actuator faults, aerodynamic disturbances, or structural damage. In \cite{castaneda2017extended}, sliding-mode control (SMC) in combination with extended state observer is applied for robust attitude stabilization of fixed-wing UAVs. However, the approach does not explicitly account for aerodynamic uncertainties and is limited in the sense that it cannot handle degradation of control-surface effectiveness. Incremental dynamics-based SMC approaches are proposed by \cite{wang2019incremental,wang2019incremental2} with consideration of actuator faults, including partial loss of control effectiveness, as well as structural damage. The stability analysis relies on lumped uncertainty bounds for the control effectiveness matrix, which is overly conservative and the proofs apply incremental control inputs within a time-continuous Lyapunov framework, which introduces technical challenges. In addition, the approaches use static gains and can not adapt to the amount of damage being present. The work of \cite{bao2021design} employs a SMC-based backstepping approach for robust fixed-wing UAV control. It is capable of handling matched disturbances but assumes perfect knowledge of the control-surface effectiveness matrix. The authors in \cite{zhao2014fault} design a SMC approach based on a linearized aircraft model that incorporates modeled damage. Although robustness is achieved, the control law requires prior knowledge of the structure of the damage-related input matrix, in contrast to most existing approaches, which assume only bounds on the uncertainties. In \cite{he2020reconfigurable}, robust control with respect to matched uncertainties is considered. Nonlinear dynamic inversion (NDI) in combination with a disturbance observer is applied for the rejection of the uncertainties.

Adaptive control methods aim to adjust the control laws in real-time based on changes in the system dynamics, such as those caused by damage \cite{ioannou1996robust}. These methods are particularly useful when the system is subject to unknown or varying conditions. In \cite{kim2013flight}, a neural-network-based adaptive scheme is proposed to compensate for matched uncertainties in a fixed-wing UAV experiencing partial wing loss. The method is based on NDI and therefore assumes perfect knowledge of the control effectiveness matrix. The approach of \cite{chowdhary2013guidance} employs model reference adaptive control (MRAC) to compensate for matched uncertainties in the UAV inner-loop dynamics and was successfully validated in flight tests with a $\SI{25}{\%}$ right-wing loss. The controller relies on inversion of a nominal model and assumes purely additive uncertainties. In \cite{du2025fault}, a control strategy for fixed-wing aircraft with asymmetric wing damage is proposed. Based on linearized dynamics a trimming algorithm is combined with an adaptive control allocation for an attitude PID controller. However, the PID gains themselves are not adapted in response to the damage. The authors in \cite{nguyen2008flight} address robustness against shifts in the center of gravity and perturbations in aerodynamic coefficients for fixed-wing aircraft. Based on a linearized model, the approach combines NDI with a PI controller and an adaptive neural-network term. The adaptive component is designed to compensate only for matched uncertainties, while NDI assumes accurate knowledge of the control effectiveness matrix. The work of \cite{liu2010modeling,guo2011multivariable} introduces a damage model based on perturbations and linearizes the aircraft dynamics around a nominal operating condition. A MRAC scheme is developed to compensate for changes in the system dynamics but relies on a linear approximation of the true nonlinear system.

While previous works have addressed robust or adaptive control for damaged UAVs, most assume perfect knowledge of the control effectiveness matrix. In addition, MRAC approaches rely on linearized dynamics, whereas most existing SMC methods can handle the full nonlinear dynamics but are not adaptive, requiring unnecessarily high gains.

In this paper, we present a robust adaptive sliding mode controller (RASMC) for nonlinear UAV dynamics under aerodynamic perturbations and degraded control surface effectiveness. We formulate an adaptation law for the controller gains, allowing them to remain low under nominal conditions and increase in the presence of damage. Lyapunov-based stability proofs are provided that guarantee the convergence of tracking error, offering formal assurances of stability even in presence of model mismatch. Additionally, we formulate conditions for the bounds of uncertainties under which closed-loop stability can be achieved, providing a clear framework for understanding the limits within which the proposed method remains stable. Finally, we demonstrate the effectiveness of the proposed method by embedding the RASMC into a full autopilot and simulating the UAV dynamics under damage.

\section{Modeling of Damaged Fixed-Wing UAV}
\label{sec:modeling_UAV}
The considered translational and rotational dynamics of the fixed-wing UAV are expressed in the body-fixed frame according to
\begin{subequations}\label{num:rigid_body_dynamics}
	\begin{align}
		\dot V
		&
		=
		-
		\omega \times V
		+
		\frac{F}{m},\\
		\dot \Theta
		&=
		\Psi\omega,\\
		\dot \omega &= 
		-J^{-1}(\omega \times J\omega)
		+
		J^{-1}M,
	\end{align}
\end{subequations}
with the states being: the body-fixed velocities
\( 
V=
\begin{bmatrix}
	u & v & w
\end{bmatrix}^T
\in \mathbb{R}^3
\), 
the rates around the axis of the body-fixed frame 
\( 
\omega=
\begin{bmatrix}
	p & q & r
\end{bmatrix}^T
\in \mathbb{R}^3
\) 
and the Euler angles
\( 
\Theta=
\begin{bmatrix}
	\phi & \theta & \psi
\end{bmatrix}^T
\in \mathbb{R}^3
\). The inertia tensor is given by $J\in\mathbb{R}^{3 \times 3}$ and the total mass of the rigid body is $m\in \mathbb{R}_{>0}$. The inputs that drive the dynamics are the total forces $F\in\mathbb{R}^{3}$ and moments $M\in\mathbb{R}^{3}$ expressed in the body-fixed frame. The Euler angle rate transformation matrix $\Psi\in\mathbb{R}^{3 \times 3}$ is
\begin{align}
	\Psi
	=
	\begin{bmatrix}
		1 
		& \sin(\phi)\tan(\theta)
		& \cos(\phi)\tan(\theta)\\
		0 
		&
		\cos(\phi)
		&
		-\sin(\phi)\\
		0
		&
		\sin(\phi)/\cos(\theta)
		&
		\cos(\phi)/\cos(\theta)
	\end{bmatrix}.\notag
\end{align}
Furthermore, we define the following quantities. The aerodynamic velocities are
\begin{align}
	u_{\mathrm{air}}=u-u_w, \quad
	v_{\mathrm{air}}=v-v_w, \quad
	w_{\mathrm{air}}=w-w_w, \notag
\end{align}
where $u_w$, $v_w$ and $w_w$ are the wind speed components in the body-fixed frame. The dynamic pressure is calculated as
\(
Q = 
0.5 \rho_{\mathrm{air}} v_{\mathrm{TAS}}^2 \notag
\), where $\rho_{\mathrm{air}}\in\mathbb{R}_{>0}$ is the air density and
\begin{align}
	v_{\mathrm{TAS}}
	=
	\sqrt{u_{\mathrm{air}}^2
		+
		v_{\mathrm{air}}^2
		+
		w_{\mathrm{air}}^2}
		\notag
\end{align}
is the true airspeed. The matrix 
\begin{align}
	T_{fa} =
	\begin{bmatrix}
		\cos(\alpha)\cos(\beta)
		& -\cos(\alpha)\sin(\beta)
		& -\sin(\alpha)\\
		\sin(\beta) & \cos(\beta) & 0\\
		\sin(\alpha)\cos(\beta) & -\sin(\alpha)\sin(\beta) & \cos(\alpha)
	\end{bmatrix} 
	\notag
\end{align}
is the rotational matrix from the aerodynamic frame to the body-fixed frame, where 
\(
\alpha = \text{atan2}(w_{\mathrm{air}},u_{\mathrm{air}})
\)
is the angle of attack and
\(
\beta = \text{arcsin}(\frac{v_{\mathrm{air}}}{v_{\mathrm{TAS}}})
\)
is the sideslip angle. We denote $S_{\mathrm{ref}}, C_{\mathrm{ref}}, B_{\mathrm{ref}}\in\mathbb{R}_{>0}$ as the reference wing area, the reference chord length and the reference span, respectively, and we define normalized rates as 
\begin{align}
	\bar p = \frac{B_{\mathrm{ref}}}{2 v_{\mathrm{TAS}}}
	p,
	\quad 
	\bar q = 
	\frac{C_{\mathrm{ref}}}{2 v_{\mathrm{TAS}}}q,
	\quad  
	\bar r = 
	\frac{B_{\mathrm{ref}}}{2 v_{\mathrm{TAS}}}r
	\notag.
\end{align}
Additionally, we denote
$\xi, \eta, \zeta\in\mathbb{R}$ as the aileron, elevator, and rudder deflections, respectively, and we define  
\(
\delta
=
\begin{bmatrix}
	\xi & \eta & \zeta
\end{bmatrix}^T
\) as the vector of control surface deflections. We denote the thrust by $T\in\mathbb{R}_{>0}$.

Aerodynamic damage is typically modeled by modifying the aircraft’s aerodynamic coefficients. Each affected coefficient may be scaled and augmented by an additive bias term to represent asymmetric structural damage. This approach captures both the reduction or alteration of aerodynamic derivatives and the emergence of destabilizing forces and moments \cite{nguyen2006dynamics,shah2008aerodynamic}. In the calculation of the aerodynamic coefficients, we introduce damage factors, which modify the contribution of the corresponding aerodynamic derivatives to account for structural damage. Each derivative is scaled by a multiplicative factor 
\(
\lambda_{\star,i}=(1-HD_{\star,i})
\), where 
\begin{align}
	H(t)
	 =
	\begin{cases}
		0\colon & t<t_d, \\
		1\colon & t\ge t_d
	\end{cases}
	\notag
\end{align}
is a Heaviside function representing the onset of damage at the time instant $t_d$, and $D_{\star,i}\in [0,1]$ is a scaling variable to account for the amount of damage being present from thereon. In addition, an additive asymmetric term 
\(
\lambda_{\star}^{\mathrm{asym}}= 
H\Delta C_{\star}^{\mathrm{asym}}
\), where 
\(
\Delta C_{\star}^{\mathrm{asym}}\in \mathbb{R}
\), is included to model forces or moments that arise due to asymmetry. This formulation captures both the alteration of the nominal aerodynamic derivatives and the introduction of destabilizing moments resulting from partial or asymmetric damage.

The total forces in the body-fixed frame are described by
\begin{align}
	F = F_{\mathrm{aero}} + F_{\mathrm{prop}} + F_{\mathrm{grav}},
	\label{num:total_forces_gen_equation}
\end{align}
where $F_{\mathrm{aero}}$ are the aerodynamic forces, $F_{\mathrm{prop}}$ are the propulsive forces, and $F_{\mathrm{grav}}$ are the gravitational forces. The propulsive and gravitational forces are given by
\begin{align}
	F_{\mathrm{prop}} 
	= 
	\begin{bmatrix}
		T \\  
		0 \\ 
		0
	\end{bmatrix}, \qquad
	F_{\mathrm{grav}}
	=
	\begin{bmatrix}
		-\sin(\theta)mg\\
		\sin(\phi)\cos(\theta)mg\\
		\cos(\phi)\cos(\theta)mg
	\end{bmatrix},
	\label{num:prop_and_grav_forces}
\end{align}
where we assume a nominal value of $g=9.80665$ and treat deviations of the gravitational constant as uncertainty. The aerodynamic forces are modeled as 
\begin{align}
	F_{\mathrm{aero}} = 
	T_{fa} Q S_{\mathrm{ref}} 
	\begin{bmatrix}
		-C_D\\
		C_Y\\
		-C_L
	\end{bmatrix}
	\label{num:aero_forces_gen_equation}
\end{align}
and the aerodynamic coefficients for lift, side-force and drag are specified by
\begin{subequations}\label{num:forces_aero_coeff_under_damage}
	\begin{align}
		C_L 
		&= 
		\bar C_L
		+
		\lambda_{L,\xi}C_{L,\xi} \xi
		+
		\lambda_{L,\eta}C_{L,\eta} \eta
		+
		\lambda_{L,\zeta}C_{L,\zeta} \zeta\\
		C_Y
		&=
		\bar C_Y
		+
		\lambda_{Y,\xi}C_{Y,\xi} \xi
		+
		\lambda_{Y,\eta}C_{Y,\eta} \eta
		+
		\lambda_{Y,\zeta}C_{Y,\zeta} \zeta\\
		C_D
		&=
		\bar C_D
	\end{align}
\end{subequations}
and
\begin{align}
	\bar C_L 
	&= 
	\lambda_{L,0}C_{L,0} 
	+ 
	\lambda_{L,\alpha}C_{L,\alpha}\alpha
	+
	\lambda_{L,\beta}C_{L,\beta}\beta
	+
	\lambda_{L,p}C_{L,p}\bar p
	\notag\\
	& \qquad
	+
	\lambda_{L,q}C_{L,q}\bar q
	+
	\lambda_{L,r}C_{L,r}\bar r 
	+
	\lambda_{L}^{\mathrm{asym}}
	\notag\\
	\bar C_Y
	&=
	\lambda_{Y,0}C_{Y,0}
	+
	\lambda_{Y,\alpha}C_{Y,\alpha} \alpha
	+
	\lambda_{Y,\beta}C_{Y,\beta} \beta
	+
	\lambda_{Y,p}C_{Y,p} \bar p
	\notag\\
	& \qquad
	+
	\lambda_{Y,q}C_{Y,q} \bar q
	+
	\lambda_{Y,r}C_{Y,r} \bar r 
	+
	\lambda_{Y}^{\mathrm{asym}}
	\notag\\
	\bar C_D
	&=
	C_{D,0}
	+
	\frac{1}{\pi e_0AR}C_L^2 
	+
	\lambda_{D}^{\mathrm{asym}}
	\notag.
\end{align}
In (\ref{num:forces_aero_coeff_under_damage}), the multiplicative factors $\lambda_{\star,i}$ and the bias terms $\lambda_{\star}^{\mathrm{asym}}$ account for the modeled damage effects. Further, Oswald efficiency number and the aspect ratio are denoted by $e_0$ and $AR$.
Substituting (\ref{num:prop_and_grav_forces}), (\ref{num:aero_forces_gen_equation}), (\ref{num:forces_aero_coeff_under_damage}) into (\ref{num:total_forces_gen_equation}) gives the total forces as 
\begin{align}
	F 
	&= 
	F_a
	+ 
	F_{\delta}\delta
	+ 
	F_{T}T
	\label{num:aero_forces_explict_expression}
\end{align}
with
\begin{align}
	F_a
	&= 
	\begin{bmatrix}
		-\sin(\theta)mg\\
		\sin(\phi)\cos(\theta)mg\\
		\cos(\phi)\cos(\theta)mg
	\end{bmatrix}
	+
	T_{fa} Q S_{\mathrm{ref}} 
	\begin{bmatrix}
		-\bar C_D\\
		\bar C_Y\\
		-\bar C_L
	\end{bmatrix}
	\notag\\
	F_{\delta}
	&= 
	T_{fa} Q S_{\mathrm{ref}} \cdot\notag\\
	& \qquad
	\begin{bmatrix}
		0 & 0 & 0\\
		\lambda_{Y,\xi}C_{Y,\xi} 
		&
		\lambda_{Y,\eta}C_{Y,\eta} 
		&
		\lambda_{Y,\zeta}C_{Y,\zeta}\\
		-
		\lambda_{L,\xi}C_{L,\xi} 
		&
		-
		\lambda_{L,\eta}C_{L,\eta} 
		&
		-
		\lambda_{L,\zeta}C_{L,\zeta} 
	\end{bmatrix}
	\notag
\end{align}
and
\(
F_{T}
=
\begin{bmatrix}
	1 & 0 &	0
\end{bmatrix}^T
\).

We assume the propulsive force to be aligned with the aircraft's centerline, meaning that only the aerodynamic moments need to be considered in the calculation of $M$. The aerodynamic moments are
\begin{align}
	M = 
	T_{fa} Q S_{\mathrm{ref}} 
	\begin{bmatrix}
		C_l B_{\mathrm{ref}}\\
		C_m C_{\mathrm{ref}}\\
		C_n B_{\mathrm{ref}}
	\end{bmatrix}	
	\label{num:aero_moments_gen_equation}
\end{align}
with aerodynamic coefficients for roll, pitch, and yaw being defined by
\begin{subequations}\label{num:moments_aero_coeff_under_damage}
\begin{align}
	C_l &=
	\bar C_l 
	+ 
	\lambda_{l,\xi}
	C_{l,\xi} \xi 
	+ 
	\lambda_{l,\eta}
	C_{l,\eta}  \eta 
	+ 
	\lambda_{l,\zeta} 
	C_{l,\zeta}\zeta \\
	C_m &=
	\bar C_m 
	+ 
	\lambda_{m,\xi}C_{m,\xi} \xi 
	+ 
	\lambda_{m,\eta}C_{m,\eta} \eta 
	+ 
	\lambda_{m,\zeta}
	C_{m,\zeta} \zeta 
	\\
	C_{n} 
	&= \bar C_{n}
	+ 
	\lambda_{n,\xi}
	C_{n,\xi} \xi 
	+ 
	\lambda_{n,\eta}
	C_{n,\eta} \eta 
	+ 
	\lambda_{n,\zeta}
	C_{n,\zeta} \zeta 
\end{align}
\end{subequations}
and
\begin{align}
	\bar C_l &= 	
	\lambda_{l,0}C_{l,0} 
	+ 
	\lambda_{l,\alpha}C_{l,\alpha} \alpha   
	+ 
	\lambda_{l,\beta}C_{l,\beta} \beta 
	+
	\lambda_{l,p}C_{l,p} \bar p 
	\notag\\
	& \quad  
	+ 
	\lambda_{l,q}C_{l,q}  \bar q 
	+
	\lambda_{l,r}C_{l,r} \bar r 
	+
	\lambda_{l}^{\mathrm{asym}}
	\notag\\
	\bar C_m &= 	
	\lambda_{m,0}C_{m,0} 
	+ 
	\lambda_{m,\alpha}C_{m,\alpha} \alpha 
	+ 
	\lambda_{m,\beta}
	C_{m,\beta} \beta 
	\notag\\
	& \quad 
	+
	\lambda_{m,p}
	C_{m,p} \bar p 
	+
	\lambda_{m,q}
	C_{m,q} \bar q 
	+
	\lambda_{m,r} 
	C_{m,r} \bar r  
	+
	\lambda_{m}^{\mathrm{asym}}
	\notag\\
	\bar C_{n}  &=	
	\lambda_{n,0} C_{n,0} 
	+ 
	\lambda_{n,\alpha}C_{n,\alpha} \alpha 
	+ 
	\lambda_{n,\beta}C_{n,\beta}\beta
	+
	\lambda_{n,p}C_{n,p} \bar p 
			\notag\\
	& \quad 
	+ 
	\lambda_{n,q}C_{n,q} \bar q 
	+
	\lambda_{n,r}C_{n,r} \bar r 			
	+
	\lambda_{n}^{\mathrm{asym}}
	\notag.
\end{align}
Again, the multiplicative factors $\lambda_{\star,i}$ and the bias terms $\lambda_{\star}^{\mathrm{asym}}$ account for the modeled damage effects. Substituting the expressions of the aerodynamic coefficients (\ref{num:moments_aero_coeff_under_damage}) into (\ref{num:aero_moments_gen_equation}) yields
\begin{align}
	M = 
	M_a
	+
	M_{\delta}
	\delta
	\label{num:aero_moments_explict_expression},	
\end{align}
where
\begin{align}
	M_a 
	&= 
	Q S_{\mathrm{ref}} T_{fa}
	\begin{bmatrix}
		B_{\mathrm{ref}}\bar C_l\\
		C_{\mathrm{ref}}\bar C_m\\
		B_{\mathrm{ref}}\bar C_{n}
	\end{bmatrix},	\notag\\
	M_{\delta} 
	&= Q S_{\mathrm{ref}} T_{fa}\cdot\notag\\
		& 
		\begin{bmatrix}
			B_{\mathrm{ref}}
			\lambda_{l,\xi}C_{l,\xi} & B_{\mathrm{ref}}
			\lambda_{l,\eta}C_{l,\eta}  & 
			B_{\mathrm{ref}}
			\lambda_{l,\zeta}C_{l,\zeta} \\
			C_{\mathrm{ref}}
			\lambda_{m,\xi}C_{m,\xi} & 
			C_{\mathrm{ref}}
			\lambda_{m,\eta}C_{m,\eta} & 
			C_{\mathrm{ref}}
			\lambda_{m,\zeta}C_{m,\zeta}\\
			B_{\mathrm{ref}}
			\lambda_{n,\xi}C_{n,\xi} & 
			B_{\mathrm{ref}}
			\lambda_{n,\eta}C_{n,\eta} & 
			B_{\mathrm{ref}}
			\lambda_{n,\zeta}C_{n,\zeta} 
		\end{bmatrix}\notag.
\end{align}

Substituting the forces (\ref{num:aero_forces_explict_expression}) and moments (\ref{num:aero_moments_explict_expression}) into the Newton-Euler equations (\ref{num:rigid_body_dynamics}) results in the UAV dynamics
\begin{subequations}\label{num:rigid_body_dynamics_explicit}
		\begin{align}
		\dot V
		&
		=
		-
		\omega \times V
		+
		\frac{F_a
			+ F_{\delta}\delta
			+ F_{T}T}{m},\\
		\dot \Theta
		&=
		\Psi\omega,
		\label{num:Euler_angle_dynamics}\\
		\dot \omega &= 
		-J^{-1}(\omega \times J\omega)
		+
		J^{-1}(	M_a
		+
		M_{\delta}
		\delta)
		\label{num:rate_dynamics}
	\end{align}
\end{subequations}
with modeled damage. 

For the subsequent controller design, we decompose the aerodynamic moment vector and control effectiveness matrix as
\begin{align}
	M_a
	&=
	M_{a,0}
	+
	\Delta
	M_{a}
	\label{num:decomp_Ma},\\
	M_\delta
	&=
	(I+\Delta
	M_{\delta})
	M_{\delta,0}.
	\label{num:decomp_Mdelta}
\end{align}
Here, $M_{a,0}\in\mathbb{R}^{3}$ and $M_{\delta,0}\in\mathbb{R}^{3\times 3}$ denote the known nominal aerodynamic moments and nominal control effectiveness matrix, respectively. The terms $\Delta M_a\in\mathbb{R}^{3}$ and $\Delta M_{\delta}\in\mathbb{R}^{3\times 3}$ represent unknown perturbations capturing damage-induced variations and modeling uncertainties. The decomposition in \eqref{num:decomp_Ma} and \eqref{num:decomp_Mdelta} enables the incorporation of nominal system knowledge into the controller design, effectively reducing the required control gains to those necessary for uncertainty compensation.

For the nominal aerodynamic properties of the UAV, we assume $M_{\delta,0}$ to be nonsingular, ensuring attitude controllability. This is a standard assumption, since the nominal aircraft design is intended to be fully controllable.

Furthermore, we restrict the Euler angles to the nonsingular region such that $\Psi$ remains invertible. This is ensured in practice by appropriate autopilot design, including reference shaping and limiting of commanded Euler angles to avoid operating near singular configurations.

\section{Robust Attitude Controller Design}
\label{sec:rob_controller}
Building upon the challenges outlined in Section \ref{sec:introduction}, we address aerodynamic damage–induced uncertainties through the development of a robust sliding mode attitude controller with adaptive gain tuning. The proposed framework provides Lyapunov-based stability guarantees and ensures tracking error convergence in the presence of parametric variations and actuator effectiveness loss.

The control design is developed in two stages. In Theorem~\ref{num:Theorem_SMC}, a sliding-mode attitude controller with fixed gains is derived and Lyapunov-based analysis is used to establish tracking error stability. Building on this, Theorem~\ref{num:Theorem_SMC_adaptive} introduces adaptive gain tuning, which also guarantees Lyapunov-based stability and reduces the conservatism of the static gain approach. The proposed robust adaptive sliding mode control (RASMC) method is defined based on the result of Theorem~\ref{num:Theorem_SMC_adaptive}.

\begin{theorem}[Static gain SMC]
	\label{num:Theorem_SMC}
	Consider the control law
	\begin{align}
		\delta
		&=
		M^{-1}_{\delta,0} J \Psi^{-1}
		(\Pi
		+
		\mu
		)
		\label{num:main_control_law}
	\end{align}
	with
	\begin{align}
		\Pi
		&=
		-
		\Psi J^{-1}M_{a,0}
		+
		\Psi J^{-1}(\omega \times J\omega)
		-
		\dot\Psi\omega,
		\label{num:aux_control_law_pi}\\
		\mu
		&=
		\ddot\Theta_r
		-
		\Lambda \dot {\tilde \Theta}
		-
		K
		\mathrm{sgn}(s)	
		\label{num:aux_control_law_mu}
	\end{align}
	where
	\(
	\tilde \Theta=\Theta-\Theta_r
	\) is the control tracking error,  
	\(
	\Theta_r
	=
	\begin{bmatrix}
		\phi_r & \theta_r &	\psi_r
	\end{bmatrix}^T
	\)
	are the reference attitude angles,
	$\Lambda\in\mathbb{R}^{3\times3}$
	is chosen so that matrix 
	\(
	\bar \Lambda=-\Lambda
	\) is Hurwitz,	
	\(
	K=\mathrm{diag}(k)\in\mathbb{R}^{3\times3}
	\) 
	is a diagonal matrix of the controller gains
	\( 
	k
	=
	\begin{bmatrix}
		k_1 & k_2 & k_3
	\end{bmatrix}^T
	\)
	and 
	\begin{align}
		s
		&=
		\begin{bmatrix}
			s_1 & s_2 & s_3
		\end{bmatrix}^T
		=
		\dot {\tilde \Theta}
		+
		\Lambda \tilde \Theta
		\label{num:def_sliding_surface}
	\end{align}
	is the sliding variable.
	Let $B_{ij}\in\mathbb{R}$ be bounds of 
	\(
	\forall t\ge0\colon
	|\Xi_{ij}(t)|
	\le
	B_{ij}
	\),
	with matrix elements $[\Xi_{ij}]=\Xi\in\mathbb{R}^{3\times3}$ being defined by
	\begin{align}
		\Xi
		&= 
		\Psi J^{-1} 
		\Delta
		M_{\delta}
		J\Psi^{-1}
		\label{mum:def_Sigma}
	\end{align}
	and let $a_i\in\mathbb{R}$ be bounds of
	\(
	\forall t\ge0\colon
	|\iota_i(t)|
	\le
	a_i
	\),
	with elements $\iota_i$ being defined by the vector
	\begin{align}
		\begin{bmatrix}
			\iota_1 & \iota_2 & \iota_3
		\end{bmatrix}^T
		=
		\Psi J^{-1}
		\Delta
		M_{a}
		+ 
		\Xi
		(
		\Pi
		+
		\ddot\Theta_r
		-
		\Lambda \dot {\tilde \Theta})
		\label{mum:def_iota}.
	\end{align}
	Denote 
	\begin{align}
		D
		=
		\begin{bmatrix}
			0 & 
			\frac{
				B_{12}
			}
			{1-B_{11}} &
			\frac{B_{13}}
			{1-B_{11}}\\
			\frac{B_{21}}
			{1-B_{22}} & 
			0 &
			\frac{B_{23}}
			{1-B_{22}}\\
			\frac{B_{31}}
			{1-B_{33}} & 
			\frac{B_{32}}
			{1-B_{33}}
			&
			0		
		\end{bmatrix},
		\quad
		z
		=
		\begin{bmatrix}
			\frac{a_1
				+\epsilon_1}{1
				-
				B_{11}}\\
			\frac{a_2
				+\epsilon_2}{1
				-
				B_{22}}	\\
			\frac{a_3
				+\epsilon_3}{1
				-
				B_{33}
			}		
		\end{bmatrix}
		\label{mum:def_D_and_z},
	\end{align}
	where $\epsilon_i>0$ is a tuning parameter. If the spectral radius $\varrho(D)<1$ and $\forall i \colon B_{ii}<1$
	then choosing the controller gains $k$ as
		\begin{align}
		(I_{3 \times 3}-D)	
		k
		=z
		\label{mum:k_eq_sys_static}
	\end{align}
	guarantees the tracking error to converge to zero, i.\,e. $\tilde \Theta\to0$ for $t\to\infty$.
\end{theorem}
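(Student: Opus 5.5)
We plan to derive the closed-loop sliding-variable dynamics, show that each $s_i$ satisfies an $\eta$-reachability condition, and then conclude $\tilde\Theta\to0$ from the Hurwitz property of $-\Lambda$.

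\emph{Step 1: second-order attitude dynamics.} Differentiating \eqref{num:Euler_angle_dynamics} gives $\ddot\Theta=\dot\Psi\omega+\Psi\dot\omega$. Substituting \eqref{num:rate_dynamics} and the decompositions \eqref{num:decomp_Ma}, \eqref{num:decomp_Mdelta} yields
\begin{align}
\ddot\Theta=\dot\Psi\omega-\Psi J^{-1}(\omega\times J\omega)+\Psi J^{-1}(M_{a,0}+\Delta M_a)+\Psi J^{-1}(I+\Delta M_\delta)M_{\delta,0}\delta. \notag
\end{align}
Inserting \eqref{num:main_control_law} gives $\Psi J^{-1}M_{\delta,0}\delta=\Pi+\mu$. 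By \eqref{num:aux_control_law_pi}, the nominal part cancels exactly, so
\begin{align}
\ddot\Theta=\mu+\Psi J^{-1}\Delta M_a+\Xi(\Pi+\mu), \notag
\end{align}
where we used $\Psi J^{-1}\Delta M_\delta M_{\delta,0}\delta=\Psi J^{-1}\Delta M_\delta J\Psi^{-1}(\Pi+\mu)=\Xi(\Pi+\mu)$.

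\emph{Step 2: sliding dynamics.} Using $\dot s=\ddot\Theta-\ddot\Theta_r+\Lambda\dot{\tilde\Theta}$ together with \eqref{num:aux_control_law_mu} and \eqref{mum:def_iota}, we obtain
\begin{align}
\dot s=\iota-(I+\Xi)K\,\mathrm{sgn}(s). \notag
\end{align}
Componentwise, $\dot s_i=\iota_i-(1+\Xi_{ii})k_i\mathrm{sgn}(s_i)-\sum_{j\ne i}\Xi_{ij}k_j\mathrm{sgn}(s_j)$.

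\emph{Step 3: Lyapunov function.} Take $V_i=\tfrac12 s_i^2$. Then $s_i\dot s_i\le |s_i|\big(a_i-(1-B_{ii})k_i+\sum_{j\ne i}B_{ij}k_j\big)$, provided $k\ge0$. Requiring the bracket to equal $-\epsilon_i$ gives exactly $(1-B_{ii})k_i-\sum_{j\ne i}B_{ij}k_j=a_i+\epsilon_i$. Dividing by $1-B_{ii}>0$, this is the $i$-th row of \eqref{mum:k_eq_sys_static}. Hence $\dot V_i\le-\epsilon_i|s_i|=-\epsilon_i\sqrt{2V_i}$, so each $s_i$ reaches zero in finite time $t\le |s_i(0)|/\epsilon_i$ and stays there. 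Summing $V=\sum_i V_i$ gives the same conclusion for $s$.

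\emph{Main obstacle: the gain vector is well defined and nonnegative.} Since $D\ge0$ entrywise and $\varrho(D)<1$, the Neumann series gives $(I-D)^{-1}=\sum_{n\ge0}D^n\ge0$ entrywise. Because $z>0$ (as $a_i\ge0$, $\epsilon_i>0$, $B_{ii}<1$), it follows that $k=(I-D)^{-1}z\ge z>0$. This nonnegativity is essential: it is what allows the worst-case bound $|\Xi_{ij}k_j\mathrm{sgn}(s_j)|\le B_{ij}k_j$ in Step 3.

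\emph{Subtle point: circularity in $\iota$.} The quantity $\iota$ depends on $\dot{\tilde\Theta}$ and on the states, so the bound $a_i$ is an assumption along trajectories. We take it as given in the statement, as the theorem does, and remark on it only informally.

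\emph{Step 4: conclusion.} Once $s\equiv0$, the error obeys $\dot{\tilde\Theta}=-\Lambda\tilde\Theta$. Since $-\Lambda$ is Hurwitz, $\tilde\Theta\to0$ exponentially. Trajectories are understood in the Filippov sense, because of the discontinuous $\mathrm{sgn}$ term.
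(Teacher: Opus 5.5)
Your proposal is correct and follows essentially the same route as the paper: derive $\dot s=\iota-(I+\Xi)K\,\mathrm{sgn}(s)$, bound each component using $B_{ij}$, $a_i$ and $k\ge 0$, impose the margin $-\epsilon_i$ to obtain $(I-D)k=z$, and secure nonnegativity of $k$ from $D\ge 0$ and $\varrho(D)<1$. The differences are only cosmetic: you use per-axis Lyapunov functions, which make the finite-time reaching explicit; you use the Neumann series instead of the Perron--Frobenius citation; and you add an explicit closing step $s\equiv 0\Rightarrow\tilde\Theta\to 0$ via the Hurwitz property of $-\Lambda$, which the paper leaves implicit.
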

\begin{proof}
	Consider the Lyapunov function candidate
	\begin{align}
		V_1
		&=
		\frac{1}{2}s^Ts.
		\label{num:def_LyapunovFct1}
	\end{align}
	Substituting 
	\(
	\dot s
	=
	\ddot {\tilde \Theta}
	+
	\Lambda \dot{\tilde \Theta}
	\) 
	from (\ref{num:def_sliding_surface}) into 
	\(
	\dot V_1
	=
	s^T\dot s
	\) 
	yields
	\begin{align}
		\dot V_1
		&=
		s^T
		(
		\ddot \Theta
		- 
		\ddot\Theta_r
		+
		\Lambda \dot{\tilde \Theta}
		)
		\label{mum:CLF_derivative1}.
	\end{align}
	Differentiating (\ref{num:Euler_angle_dynamics}), leads to 
	\(
	\ddot \Theta
	=
	\dot\Psi\omega
	+
	\Psi\dot\omega
	\) 
	and substituting (\ref{num:rate_dynamics}), (\ref{num:decomp_Ma}), (\ref{num:decomp_Mdelta})
	gives
	\begin{align}
	\ddot \Theta
	&=
	\dot\Psi\omega
	-
	\Psi J^{-1}(\omega \times J\omega)
	+
	\Psi J^{-1}M_{a,0}
	+
	\Psi J^{-1}
	\Delta
	M_{a}\notag\\
	& \quad
	+
	\Psi J^{-1} M_{\delta,0}\delta
	+
	\Psi J^{-1} \Delta M_{\delta}M_{\delta,0}\delta
	\label{num:dyn_Theta_sub1}
	\end{align}
	Substituting (\ref{num:main_control_law}) into (\ref{num:dyn_Theta_sub1}) leads to
	\begin{align}
		\ddot \Theta
		&=
		\dot\Psi\omega
		-
		\Psi J^{-1}(\omega \times J\omega)
		+
		\Psi J^{-1}M_{a,0}
		+
		\Psi J^{-1}
		\Delta
		M_{a}\notag\\
		& \quad
		+
		\Pi 
		+ 
		\mu
		+
		\Psi J^{-1} \Delta M_{\delta}J\Psi^{-1} 
		(\Pi + \mu)
		\label{num:dyn_Theta_sub1b}
	\end{align}
	and rearranging (\ref{num:dyn_Theta_sub1b}) with substitution of (\ref{num:aux_control_law_pi}) implies
	\begin{align}
		\ddot \Theta
		&=
		\Psi J^{-1}
		\Delta
		M_{a}
		+
		\Psi J^{-1} \Delta M_{\delta}J\Psi^{-1} 
		\Pi \notag\\
		& \qquad
		+
		(
		I_{3 \times 3}
		+ 
		\Psi J^{-1} \Delta M_{\delta}J\Psi^{-1} 
		)
		\mu.
		\label{num:dyn_Theta_sub2}
	\end{align}
	Based on the definition of $\Xi$ according to (\ref{mum:def_Sigma}), the expression (\ref{num:dyn_Theta_sub2}) is rewritten as
	\begin{align}
		\ddot \Theta
		&=
		\Psi J^{-1}
		\Delta
		M_{a}
		+
		\Xi
		\Pi
		+
		(
		I_{3 \times 3}
		+
		\Xi
		)
		\mu
		\label{num:dyn_Theta_sub3}.
	\end{align}
	Substituting (\ref{num:dyn_Theta_sub3}) into (\ref{mum:CLF_derivative1}) 
	results in
	\begin{align}
		\dot V_1
		&=
		s^T
		(
		\Psi J^{-1}
		\Delta
		M_{a}
		+
		\Xi
		\Pi
		+
		(
		I_{3 \times 3}
		+
		\Xi
		)
		\mu
		- 
		\ddot\Theta_r
		+
		\Lambda \dot{\tilde \Theta}
		)
		\label{mum:CLF_derivative2a}.
	\end{align}
	and substituting (\ref{num:aux_control_law_mu}) into (\ref{mum:CLF_derivative2a}) yields
	\begin{align}
		\dot V_1
		&=
		s^T
		\big(
		\Psi J^{-1}
		\Delta
		M_{a}
		+
		\Xi
		(
		\Pi
		+
		\ddot\Theta_r
		-
		\Lambda \dot {\tilde \Theta}
		)\notag\\
		& \qquad
		-
		(
		I_{3 \times 3}
		+
		\Xi
		)
		K
		\mathrm{sgn}(s)	
		\big)
		\label{mum:CLF_derivative2}.
	\end{align}
	Using the definition of $\iota_i$ from (\ref{mum:def_iota}) the expression (\ref{mum:CLF_derivative2}) can be written as
	\begin{align}
		\dot V_1
		&=
		\sum_{i=1}^{3}
		\Bigg
		(
		-
		|s_i|
		(1
		+
		\Xi_{ii}
		)k_i
		-
		s_i
		\sum_{\substack{j=1 \\ j \neq i}}^{3}
		\Xi_{ij}k_j
		\text{sgn}(s_j)
		+
		s_i\iota_i
		\Bigg
		)
		,\notag
	\end{align}
	which leads to
	\begin{align}
		\dot V_1
		\le
		\sum_{i=1}^{3}
		|s_i|
		\Bigg
		(
		-
		(1
		+
		\Xi_{ii}
		)
		k_i
		+
		\sum_{\substack{j=1 \\ j \neq i}}^{3}
		|\Xi_{ij}|k_j
		+
		|\iota_i|
		\Bigg
		)
		\label{mum:CLF_derivative3},
	\end{align}
	where it is assumed that the gains $k_1, k_2, k_3$ are all non-negative.
	To achieve $\forall s\not=0\colon\dot V_1<0$, we want $\forall i,j \in\{1,2,3\}\colon$
	\begin{align}
		-
		(1
		+
		\Xi_{ii}
		)
		k_i
		+
		\sum_{j\not=i}
		|\Xi_{ij}|k_j
		+
		|\iota_i|
		\le
		-\epsilon_i
		<
		0
		\label{mum:CLF_derivative4},
	\end{align}
	where $\epsilon_i>0$ is a tuning parameter. The left hand side of (\ref{mum:CLF_derivative4}) can be upper bounded as
	\begin{align}
		-
		(
		1
		-
		B_{ii}
		)
		k_i
		+
		\sum_{j\not=i}
		B_{ij}
		k_j
		+
		a_i
		\le
		-\epsilon_i
		\label{mum:CLF_derivative5},
	\end{align}
	by taking into consideration
	\(
	\forall t\ge0\colon
	|\Xi_{ij}(t)|
	\le
	B_{ij}
	\)
	and
	\(
	\forall t\ge0\colon
	|\iota_i(t)|
	\le
	a_i
	\).
	Multiplying (\ref{mum:CLF_derivative5}) by
	\(
	-
	\frac{1}
	{	
		1
		-
		B_{ii}
		}
		<0
	\)
	gives
	\begin{align}
		k_i
		-
		\sum_{j\not=i}
		\frac{B_{ij}
			}
		{1
			-
			B_{ii}}
			k_j
		\ge
		\frac{
			a_i
			+
			\epsilon_i}
		{1
			-
			B_{ii}
			}
		\label{mum:CLF_derivative6},
	\end{align}
	leading to the equation system
	(\ref{mum:k_eq_sys_static})
	with solutions
	\( 
	k
	=
	\begin{bmatrix}
		k_1 & k_2 & k_3
	\end{bmatrix}^T
	\)
	and the definitions of $D$ and $z$ according to (\ref{mum:def_D_and_z}). As we assumed non-negative values for the gains $k$, it is required to show that the solution of (\ref{mum:k_eq_sys_static}) is non-negative. According to the Perron–Frobenius theorem \cite[Sec.~7.4]{slotine1991applied}, the solution 
	 $k$ of (\ref{mum:k_eq_sys_static}) has non-negative components if all entries of 
	$D$ and $z$ 
	are non-negative and the spectral radius satisfies 
	$\varrho(D)<1$, which all holds true by the made assumptions.
\end{proof}

In Theorem \ref{num:Theorem_SMC}, the bounds $B_{ij}$ represent the magnitude of uncertainties in the control effectiveness matrix, specifically capturing deviations $\Delta M_{\delta}$ of the control surfaces from their nominal behavior. The bounds $a_i$ quantify the combined effect of aerodynamic uncertainties, including both the control-surface-related ($\Delta M_{\delta}$) and the non-control-surface-related disturbances ($\Delta M_a$). Together, $B_{ij}$ and $a_i$ define the level of uncertainties that the sliding-mode switching gain 
$K\text{sgn}(s)$ must compensate. The assumption $B_{ii}<1$ guarantees that the gains $k_i$ are always non-negative, preventing control inversion along any axis. The spectral radius condition $\varrho(D)<1$ ensures that uncertainties in one axis do not dominate or reinforce those in others, avoiding unbounded gains.

The static gain SMC of Theorem \ref{num:Theorem_SMC} requires unnecessarily high gains in the nominal case, when no damage is present. This can amplify noise and increase chattering. To address this issue, we extend the results to an adaptive gain approach, formalized as the proposed RASMC in the following theorem.

\begin{theorem}[RASMC]
	\label{num:Theorem_SMC_adaptive}
	Consider the control law (\ref{num:main_control_law})
	with (\ref{num:aux_control_law_pi}) and (\ref{num:aux_control_law_mu}),
	where
	\(
	\tilde \Theta=\Theta-\Theta_r
	\) is the control tracking error,  
	\(
	\Theta_r
	=
	\begin{bmatrix}
		\phi_r & \theta_r &	\psi_r
	\end{bmatrix}^T
	\)
	are the reference attitude angles,
	$\Lambda\in\mathbb{R}^{3\times3}$
	is chosen so that matrix
	\(
	\bar \Lambda=-\Lambda
	\) is Hurwitz,	
	\(
	K=\mathrm{diag}(k)\in\mathbb{R}^{3\times3}
	\) 
	is a diagonal matrix of the adaptive controller gains
	\( 
	k
	=
	\begin{bmatrix}
		k_1 & k_2 & k_3
	\end{bmatrix}^T
	\)
	and $s$	is the sliding variable defined by (\ref{num:def_sliding_surface}).
	Let $B_{ij}\in\mathbb{R}$ be bounds of 
	\(
	\forall t\ge0\colon
	|\Xi_{ij}(t)|
	\le
	B_{ij}
	\),
	with matrix elements $[\Xi_{ij}]=\Xi\in\mathbb{R}^{3\times3}$ being defined by (\ref{mum:def_Sigma})
	and let $a_i\in\mathbb{R}$ be bounds of
	\(
	\forall t\ge0\colon
	|\iota_i(t)|
	\le
	a_i
	\),
	with elements $\iota_i$ being defined by the vector (\ref{mum:def_iota}). Define $D$ and $z$ according to (\ref{mum:def_D_and_z}), where $\epsilon_i>0$ is a tuning parameter. Assume $\varrho(D)<1$ and $\forall i \colon B_{ii}<1$
	then from
	\begin{align}
		(I_{3 \times 3}-D)	
		k_d
		=
		z
		\label{num:equ_sys_adapt_gain_bounds}
	\end{align}
	a solution 
	\(
	k_d
	=
	\begin{bmatrix}
	k_{1,d}	& k_{2,d} & k_{3,d}
	\end{bmatrix}^T
	\)
	with  non-negative components can be obtained.	
	Inspired by \cite{huang2008adaptive}, we propose the modified gain adaptation law
	\begin{align}
		\dot{k}_i =
		\begin{cases}
			(1-B_{ii})
			\frac{1}{\gamma_i}
			|s_i|
			& k_i < k_{i,d},\\
			0,
			& k_i = k_{i,d},
		\end{cases}
		\qquad
		k_i(t_0)\ge 0,
	\label{num:adaptation_law}
	\end{align}
	where $\gamma_i>0$ is a tuning parameter and $k_{i,d}$ is the upper bound of the adaptive gain obtained from the solution of (\ref{num:equ_sys_adapt_gain_bounds}).
	
	If the gains $k$ are adapted according to (\ref{num:adaptation_law}) then the tracking error converges to zero, i.\,e. $\tilde \Theta\to0$ for $t\to\infty$.
\end{theorem}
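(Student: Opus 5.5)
We follow the standard adaptive sliding-mode argument of \cite{huang2008adaptive} and reuse the inequality chain from the proof of Theorem~\ref{num:Theorem_SMC}.

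\emph{Step 1: well-posedness and boundedness of the gains.} Nonnegativity of $k_d$ follows exactly as in Theorem~\ref{num:Theorem_SMC}. The matrices $D$ and $z$ are entrywise nonnegative, $\varrho(D)<1$, and $B_{ii}<1$. Hence $(I-D)^{-1}=\sum_{n\ge0}D^n\ge0$, so $k_d=(I-D)^{-1}z\ge0$. We assume $k_i(t_0)\le k_{i,d}$. Then \eqref{num:adaptation_law} gives $\dot k_i\ge0$ with $k_i$ saturating at $k_{i,d}$. So each $k_i$ is nondecreasing and bounded, $0\le k_i(t)\le k_{i,d}$, and in particular $k_i$ stays nonnegative. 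This is what the bound \eqref{mum:CLF_derivative3} requires.

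\emph{Step 2: composite Lyapunov function.} We define
\begin{align}
V_2=\frac{1}{2}s^Ts+\sum_{i=1}^{3}\frac{\gamma_i}{2}\,(k_i-k_{i,d})^2 .\notag
\end{align}
The computation leading to \eqref{mum:CLF_derivative3} holds verbatim with time-varying $k$, since $K$ enters only algebraically. Using $|\Xi_{ij}|\le B_{ij}$, $|\iota_i|\le a_i$ and $k_j\le k_{j,d}$, we obtain
\begin{align}
\dot V_1\le\sum_i|s_i|\Big(-(1-B_{ii})k_i+\sum_{j\ne i}B_{ij}k_{j,d}+a_i\Big).\notag
\end{align}
By \eqref{num:equ_sys_adapt_gain_bounds}, $\sum_{j\ne i}B_{ij}k_{j,d}+a_i=(1-B_{ii})k_{i,d}-\epsilon_i$. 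Hence
\begin{align}
\dot V_1\le\sum_i|s_i|\big(-(1-B_{ii})(k_i-k_{i,d})-\epsilon_i\big).\notag
\end{align}
On the branch $k_i<k_{i,d}$, the adaptation term contributes $\gamma_i(k_i-k_{i,d})\dot k_i=(1-B_{ii})(k_i-k_{i,d})|s_i|$. This exactly cancels the first term. On the branch $k_i=k_{i,d}$, the first term vanishes directly. In both cases
\begin{align}
\dot V_2\le-\sum_i\epsilon_i|s_i|\le0.\notag
\end{align}

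\emph{Step 3: convergence.} $V_2$ is nonincreasing, so $s$ and $k$ are bounded. Integrating gives $\int_{t_0}^\infty|s_i|\,dt<\infty$. To conclude $s\to0$, we use Barbalat's lemma: boundedness of $\dot s$ follows from \eqref{num:dyn_Theta_sub3} and boundedness of $\iota$, $\Xi$, $k$. Alternatively, we use the finite-time argument: once $k_i$ saturates, the strict decrease $\dot V_1\le-\sum\epsilon_i|s_i|$ applies, and before saturation $\dot k_i>0$ whenever $s_i\neq0$. With $s\to0$, the relation $\dot{\tilde\Theta}=-\Lambda\tilde\Theta+s$ is a stable linear system with a vanishing input, since $-\Lambda$ is Hurwitz. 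Hence $\tilde\Theta\to0$.

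\emph{Main obstacle.} The main difficulty is rigor in Steps 2 and 3. The switching in \eqref{num:adaptation_law} and the discontinuous $\mathrm{sgn}(s)$ force us to work with Filippov or absolutely continuous solutions, and on the saturated branch we must check that $k_i$ actually stays at $k_{i,d}$.

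The Barbalat step needs uniform continuity of $|s_i|$. This requires that $\iota$ be bounded along solutions, which the theorem assumes but which implicitly depends on $\dot{\tilde\Theta}$ staying bounded. We first establish boundedness of $s$, hence of $\tilde\Theta$ and $\dot{\tilde\Theta}$, and then invoke the assumed bounds.
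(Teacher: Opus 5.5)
Your proof is correct and is essentially the paper's argument. It uses the same composite function $V_2=V_1+\sum_i\frac{\gamma_i}{2}(k_i-k_{i,d})^2$, cancels the $(1-B_{ii})(k_i-k_{i,d})|s_i|$ terms through the adaptation law in the same way, and uses $0\le k_j\le k_{j,d}$ together with $(I-D)k_d=z$, only substituted in a different order; as you rightly make explicit, $0\le k_j\le k_{j,d}$ needs $k_i(t_0)\le k_{i,d}$.

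Your Barbalat step and the filter argument via $\dot{\tilde\Theta}=-\Lambda\tilde\Theta+s$ tighten the paper's loose ending, which asserts $V_2\to 0$; that is false in general, since $k_i$ can settle below $k_{i,d}$. Drop the alternative ``finite-time'' remark, though: the total gain increase $\frac{1-B_{ii}}{\gamma_i}\int|s_i|\,dt$ is finite and may stay below $k_{i,d}-k_i(t_0)$, so nothing forces the gains to saturate.
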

\begin{proof}
	Let $k_{i,d}$ be some static gain and 
	\(
	\tilde k_i = k_i - k_{i,d}
	\)
	be the gain error. Consider the Lyapunov function candidate
	\begin{align}
		V_2
		&=
		V_1
		+
		\sum_{i=1}^{3}
		\frac{1}{2}\gamma_i
		\tilde k_i^2.
		\label{num:def_lyapunovfct2}
	\end{align}
	with $V_1$ being defined by (\ref{num:def_LyapunovFct1}).
	Consequently,
	\(
	\dot	V_2
		=
		\dot V_1
		+
		\sum_{i=1}^{3}
		\gamma_i
		\tilde k_i\dot k_i,
	\)
	which by substitution of the adaptation law (\ref{num:adaptation_law}) yields
	\begin{align}
		\dot V_2
		&=
		\dot V_1
		+
		\sum_{i=1}^{3}
		\tilde k_i
		(1-B_{ii})
		|s_i|.
		\label{num:V2dot1}
	\end{align}
	Substituting (\ref{mum:CLF_derivative3}) in (\ref{num:V2dot1}) results in
	\begin{align}
	&
	\dot V_2
	\le
	\sum_{i=1}^{3}
	|s_i|\cdot
	\Bigg
	(
	-
	(1
	+
	\Xi_{ii}
	)
	k_i
	\notag\\
	& \qquad \qquad \qquad
	+
	\sum_{\substack{j=1 \\ j \neq i}}^{3}
	|\Xi_{ij}|k_j
	+
	|\iota_i|
	+
	(1-B_{ii})
	\tilde k_i
	\Bigg
	),
	\label{num:V2dot2}
	\end{align}
	where it is assumed that all gains $k_1, k_2, k_3$ are non-negative, which is true by applying the adaption law (\ref{num:adaptation_law}). We want $\forall s\not=0\colon\dot V_2<0$, therefore
	\begin{align}
		&
		-
		(1
		+
		\Xi_{ii}
		)
		k_i
		+
		\sum_{j\not=i}
		|\Xi_{ij}|k_j
		+
		|\iota_i|
		\notag\\
		& \qquad \qquad  \quad
		+
		(1-B_{ii})
		(k_i
		-
		k_{i,d})
		\le
		-\epsilon_i
		<
		0
		\label{num:V2dot3},
	\end{align}
	with $\epsilon_i>0$. The left hand side of (\ref{num:V2dot3}) can be upper bounded as
	\begin{align}
		&
		-
		(
		1
		-
		B_{ii}
		)
		k_i
		+
		\sum_{j\not=i}
		B_{ij}
		k_j
		+
		a_i
		\notag\\
		& \qquad \qquad \qquad \quad
		+
		(1-B_{ii})
		(
		k_i
		-
		k_{i,d}
		)
		\le
		-\epsilon_i
		\label{num:V2dot4},
	\end{align}
	by taking into consideration
	\(
	\forall t\ge0\colon
	|\Xi_{ij}(t)|
	\le
	B_{ij}
	\)
	and
	\(
	\forall t\ge0\colon
	|\iota_i(t)|
	\le
	a_i
	\). Multiplying (\ref{num:V2dot4}) with
	\(
	\frac{1}{1-B_{ii}}
	>
	0
	\) yields
	\begin{align}
		\sum_{j\not=i}
		\frac{B_{ij}}{1-B_{ii}}
		k_j
		+
		\frac{
		a_i
		+
		\epsilon_i}
		{1-B_{ii}}
		-
		k_{i,d}
		\le
		0
		\label{num:V2dot5},
	\end{align}
	We chose the static gains of as the solution of 	
	\begin{align}
		k_{i,d}
		=
		\sum_{j\not=i}
		\frac{B_{ij}
		}{
		1
		-
		B_{ii}
		}
		k_{j,d}
		+
		\frac{a_i+\epsilon_i}{1
			-
			B_{ii}}
	\label{num:V2dot6}
	\end{align}
	leading to the equation system (\ref{num:equ_sys_adapt_gain_bounds}) with definitions of $D$ and $z$ according to (\ref{mum:def_D_and_z})
	As shown in the proof of Theorem \ref{num:Theorem_SMC}, the solution $k_d$ of (\ref{num:equ_sys_adapt_gain_bounds}) has non-negative components if $\varrho(D)<1$ and $B_{ii}<1$, which all hold true by the stated assumptions. Substituting (\ref{num:V2dot6}) in (\ref{num:V2dot5}) gives
	\begin{align}
		\sum_{j\not=i}
		\frac{B_{ij}}{1-B_{ii}}
		(k_j-k_{j,d})
		\le
		0
	\label{num:V2dot8},
	\end{align}
	which is always satisfied. This follows from $\frac{B_{ij}}{1-B_{ii}}\ge 0$ and the fact that $0\le k_j\le k_{j,d}$ under the applied adaptation law (\ref{num:adaptation_law}).
	We have shown that (\ref{num:V2dot5}) and therefore also (\ref{num:V2dot3}) hold for all $k_i\le k_{i,d}$. Consequently, $V_2$ goes to zero for all $s\not=0$ and therefore $\lim_{t\to\infty}s=0$. In addition, the adaptive gains are always bounded by $k_{i,d}$ due to the adaption law (\ref{num:adaptation_law}).
\end{proof}

\section{Simulation Study}	
\label{sec:sim_study}
\begin{figure}[!tbp]
	\centering
	\includegraphics[width=1.00\linewidth]{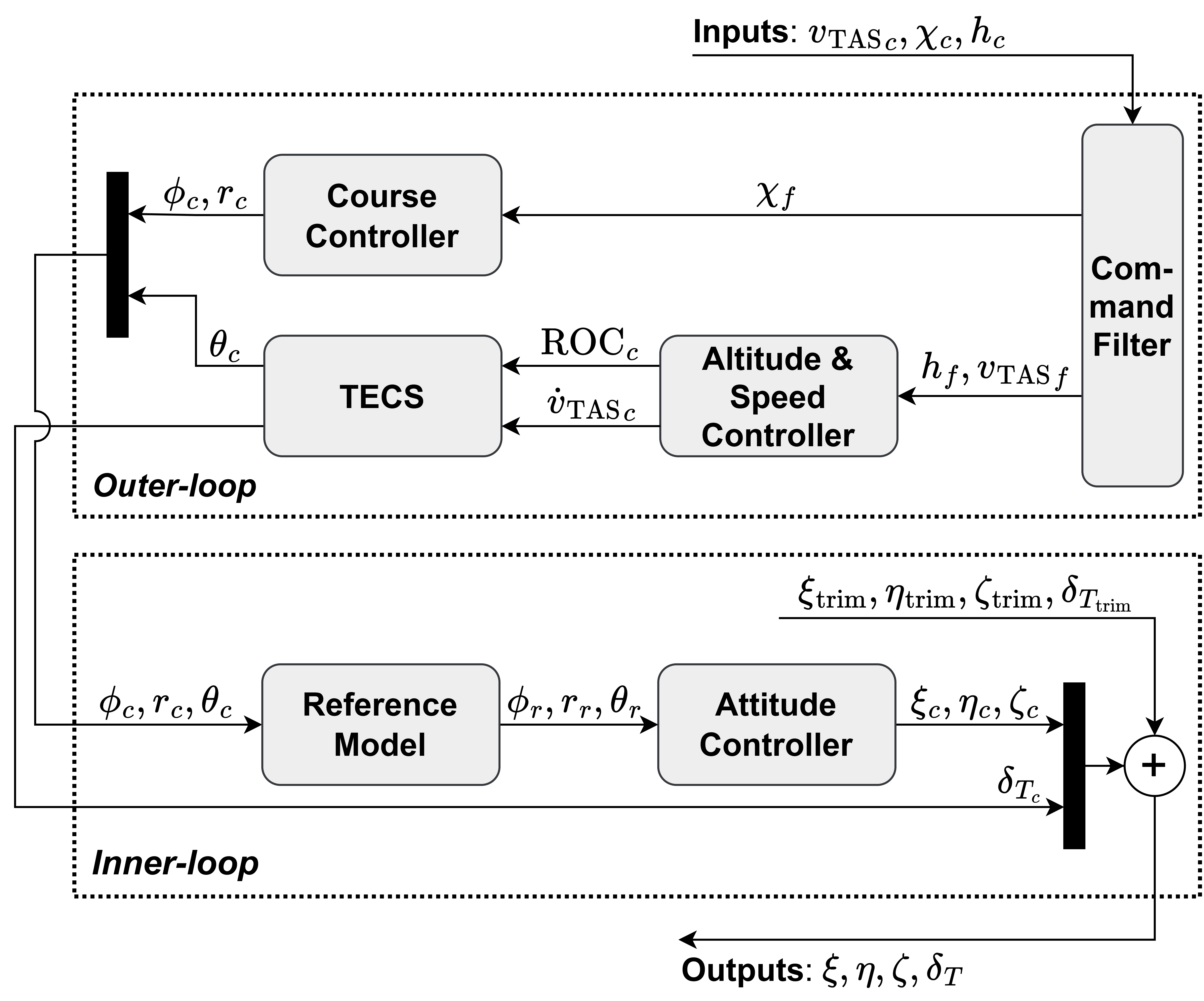}
	\caption{Autopilot with inner- and outer-loop control layers (feedback of the system states not visualized).}
	\label{fig:autopilot_structure}
\end{figure}

\begin{table}[b!]
	\centering
	\renewcommand{\arraystretch}{1.2} 
	\caption{Damage values used in simulation study}
	\label{tab:damage_parameters}
	\begin{tabular}{l c | l c | l c}
		\hline
		Parameter & Value & Parameter & Value & Parameter & Value \\
		\hline
		
		$\lambda_{l,0}$ & 0.4 & $\lambda_{l,\alpha}$ & 0.4 & $\lambda_{l,\beta}$ & 0.2 \\
		$\lambda_{l,p}$ & 0.4 & $\lambda_{l,q}$ & 0.6 & $\lambda_{l,r}$ & 0.2 \\
		$\lambda_{l,\xi}$ & 0.4 & $\lambda_{l,\eta}$ & 0.2 & $\lambda_{l,\zeta}$ & 0.3 \\
		$\lambda_{l}^{\mathrm{asym}}$ & -0.015 & $\lambda_{m,0}$ & 0.1 & $\lambda_{m,\alpha}$ & 0.3 \\
		$\lambda_{m,\beta}$ & 0.5 & $\lambda_{m,p}$ & 0.2 & $\lambda_{m,q}$ & 0.9 \\
		$\lambda_{m,r}$ & 0.1 & $\lambda_{m,\xi}$ & 0.1 & $\lambda_{m,\eta}$ & 0.6 \\
		$\lambda_{m,\zeta}$ & 0.2 & $\lambda_{m}^{\mathrm{asym}}$ & 0.03 & $\lambda_{n,0}$ & 0.8 \\
		$\lambda_{n,\alpha}$ & 0.8 & $\lambda_{n,\beta}$ & 0.9 & $\lambda_{n,p}$ & 0.8 \\
		$\lambda_{n,q}$ & 0.9 & $\lambda_{n,r}$ & 0.8 & $\lambda_{n,\xi}$ & 0.1 \\
		$\lambda_{n,\eta}$ & 0.1 & $\lambda_{n,\zeta}$ & 0.1 & $\lambda_{n}^{\mathrm{asym}}$ & -0.03 \\
		$\lambda_{L,0}$ & 0.5 & $\lambda_{L,\alpha}$ & 0.5 & $\lambda_{L,\beta}$ & 0.3 \\
		$\lambda_{L,p}$ & 0.5 & $\lambda_{L,q}$ & 0.7 & $\lambda_{L,r}$ & 0.3 \\
		$\lambda_{L,\xi}$ & 0.3 & $\lambda_{L,\eta}$ & 0.3 & $\lambda_{L,\zeta}$ & 0.3 \\
		$\lambda_{L}^{\mathrm{asym}}$ & -0.02 & $\lambda_{Y,0}$ & 0.1 & $\lambda_{Y,\alpha}$ & 0.6 \\
		$\lambda_{Y,\beta}$ & 0.1 & $\lambda_{Y,p}$ & 0.2 & $\lambda_{Y,q}$ & 0.3 \\
		$\lambda_{Y,r}$ & 0.4 & $\lambda_{Y,\xi}$ & 0.2 & $\lambda_{Y,\eta}$ & 0.3 \\
		$\lambda_{Y,\zeta}$ & 0.1 & $\lambda_{Y}^{\mathrm{asym}}$ & -0.01 & $\lambda_{D}^{\mathrm{asym}}$ & 0.1 \\
		
		\hline
	\end{tabular}
\end{table}

The damage-aware flight dynamics model introduced in Section \ref{sec:modeling_UAV} is implemented in a Python-based simulation environment and simulated in closed-loop with the proposed RASMC control law to evaluate time-domain performance. The study considers the high-speed, jet-powered fixed-wing UAV demonstrator Proteus \cite{dlr200325}, operated by DLR. Nominal aerodynamic coefficients are computed using Athena Vortex Lattice (AVL) based on the aircraft’s wing geometry \cite{DrelaAVL}.

\begin{figure*}[!tbp]
	\subfloat[Attitude tracking performance\label{fig:num_exampl_sim_results_a}]{%
		\includegraphics[width=0.49\linewidth]{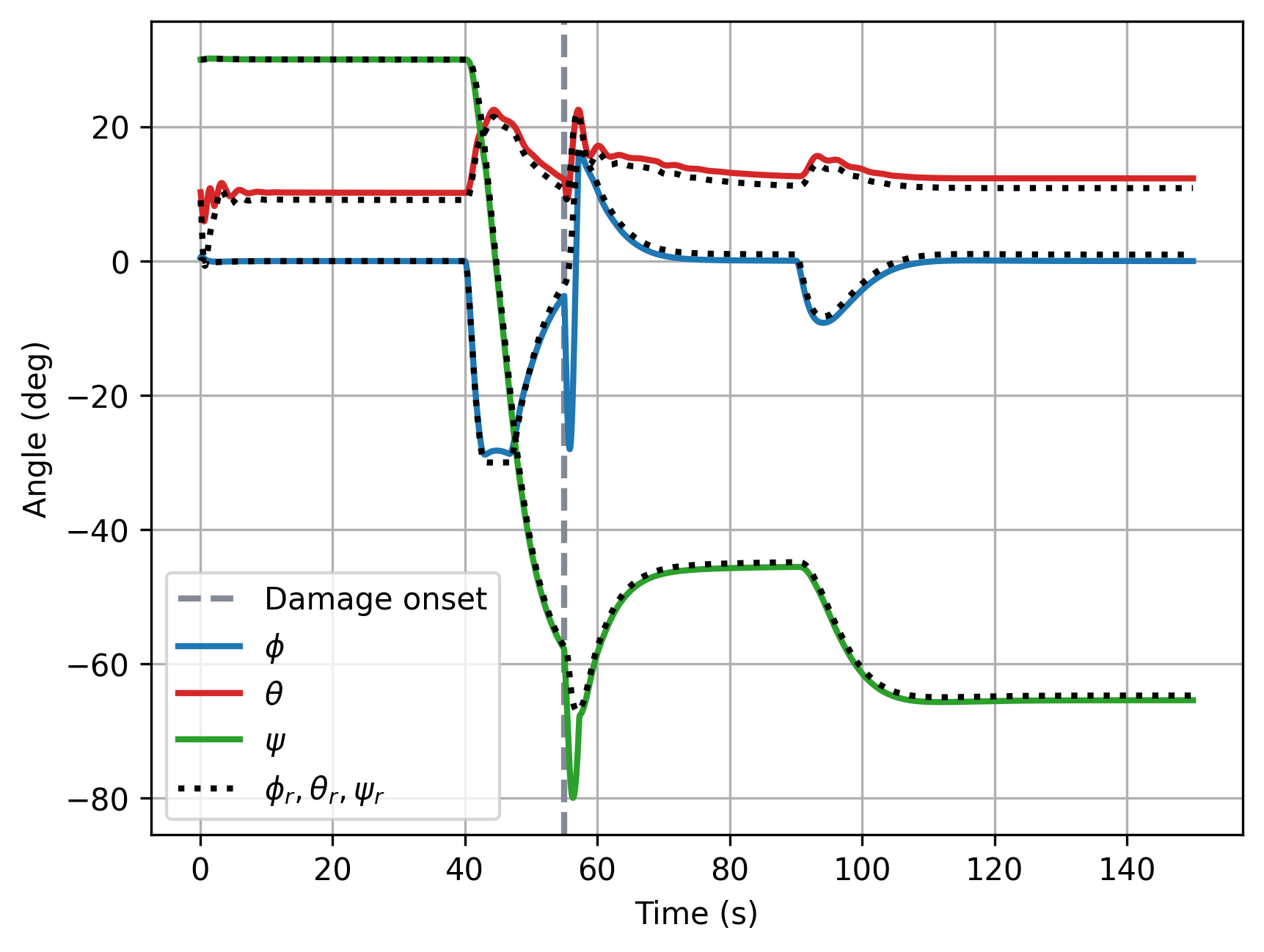}}
	\hfill
	\subfloat[Outer-loop tracking performance\label{fig:num_exampl_sim_results_b}]{\includegraphics[width=0.49\linewidth]{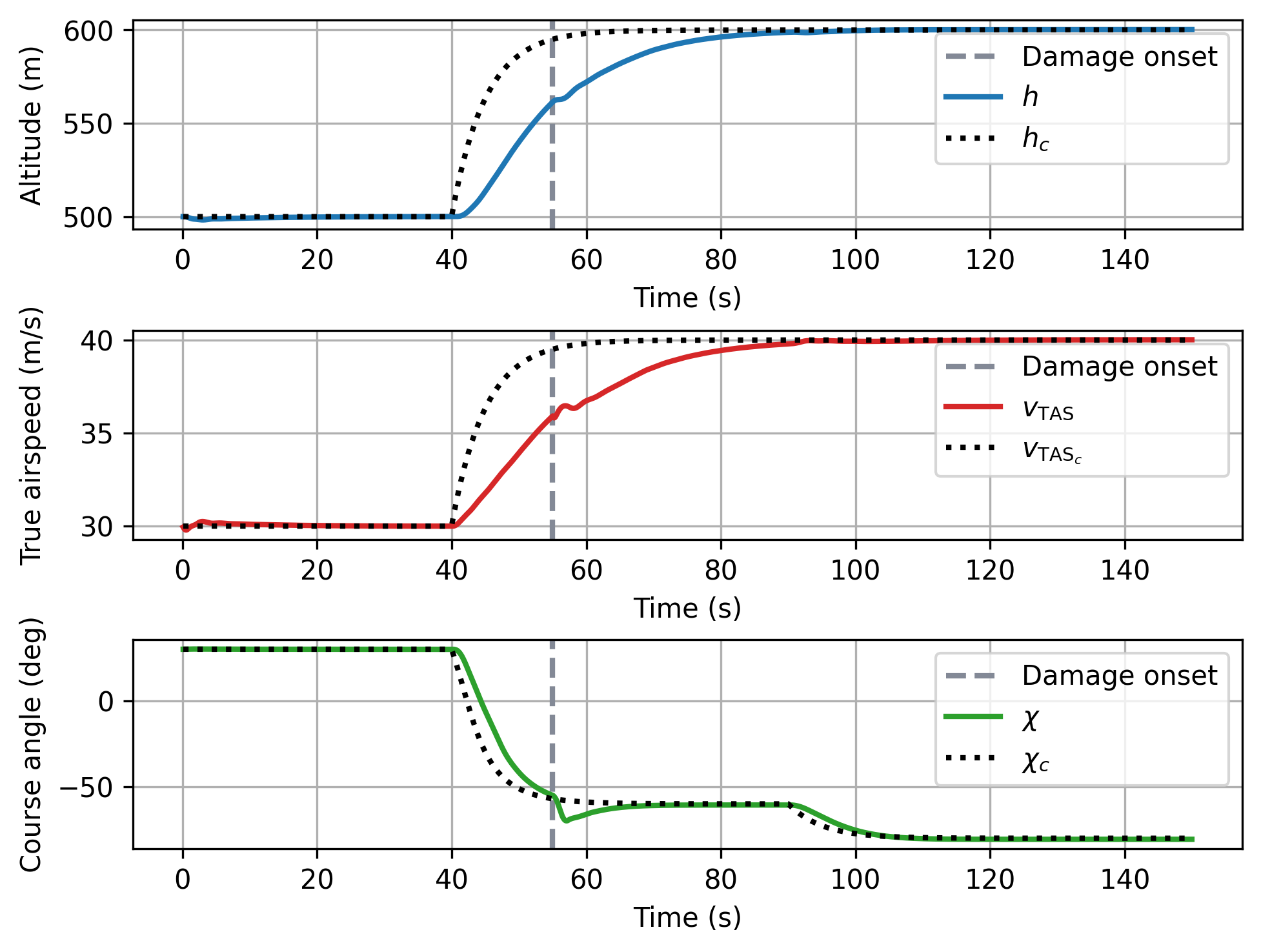}}\\
	\subfloat[Control inputs\label{fig:num_exampl_sim_results_c}]{\includegraphics[width=0.49\linewidth]{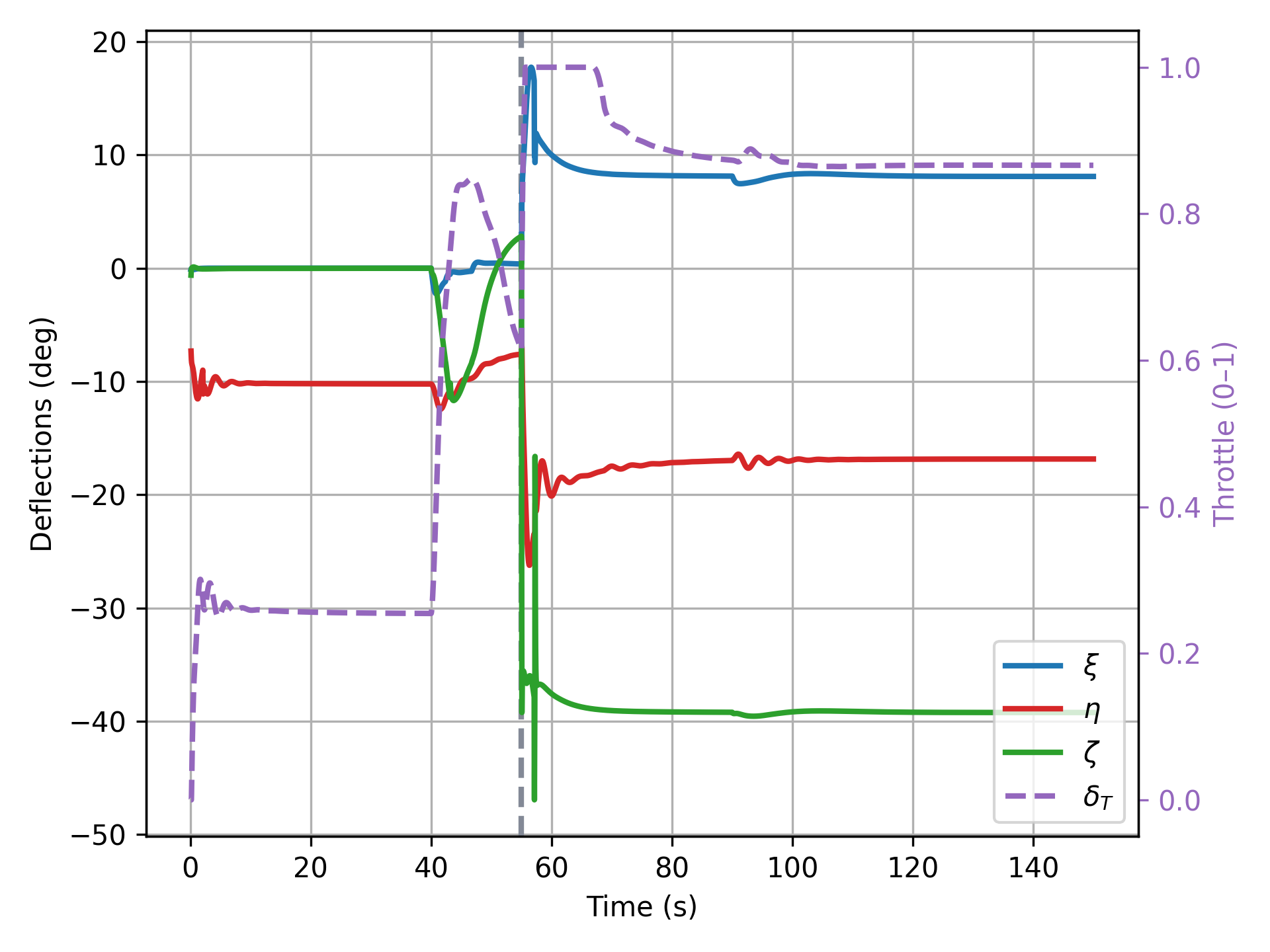}}
	\hfill
	\subfloat[Sliding variables\label{fig:num_exampl_sim_results_d}]{\includegraphics[width=0.49\linewidth]{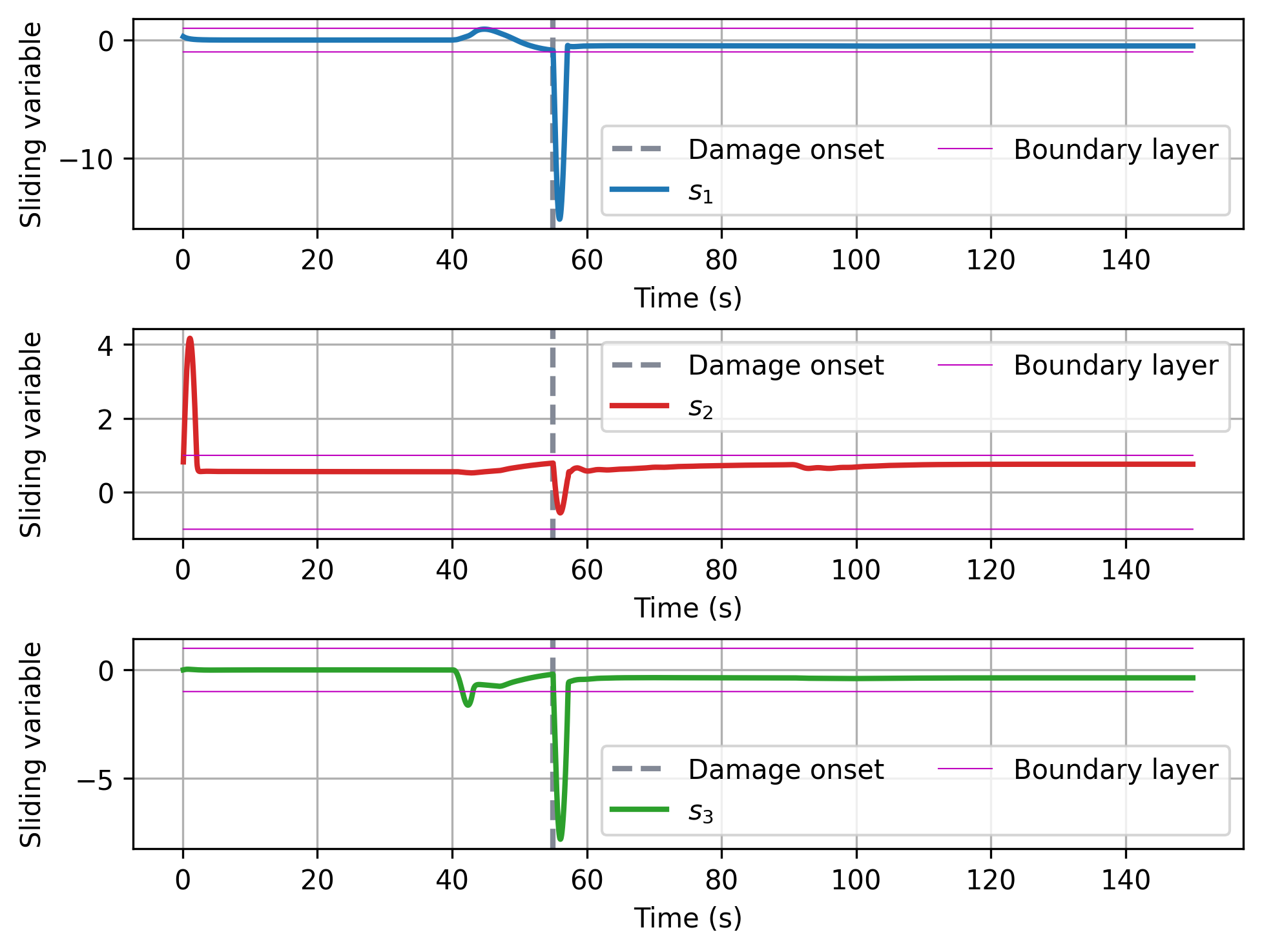}}
	\\
	\centering
	\subfloat[Adaptive gains\label{fig:num_exampl_sim_results_e}]{\includegraphics[width=0.49\linewidth]{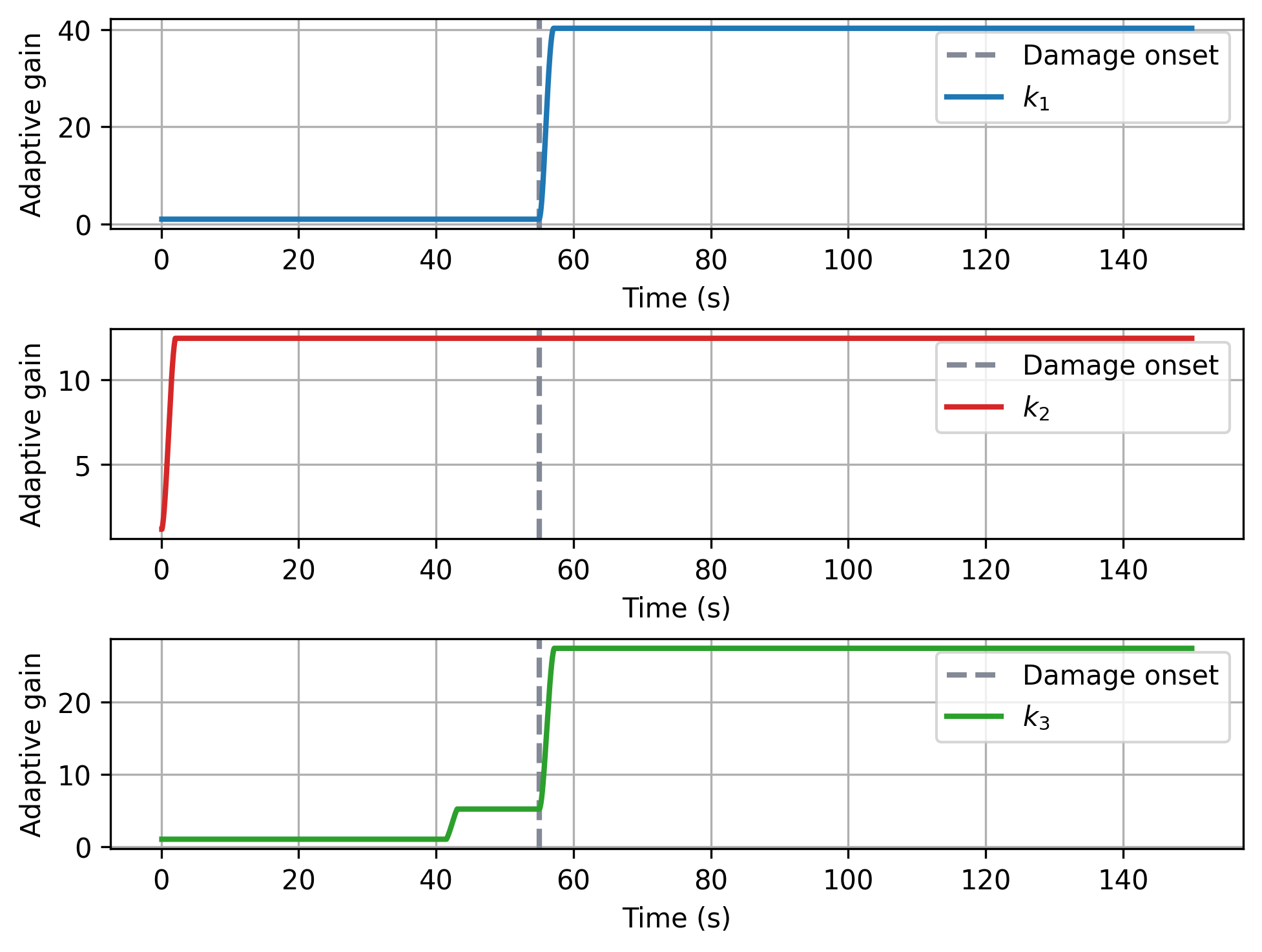}}
	\caption{Controller performance evaluation of RASMC} 
	\label{fig:num_exampl_sim_results}
\end{figure*}

The RASMC is embedded into the autopilot structure of Fig. \ref{fig:autopilot_structure}, comprising inner- and outer-loop control layers. The inputs of the autopilot are the commands for airspeed ${v_{\mathrm{VTAS}}}_c$, course $\xi_c$, and altitude $h_c$. In the outer-loop layer, these commands are processed by a command filter to smooth step changes. Based on the filtered outputs, the course controller generates commands for roll $\phi_c$ and yaw rate $r_c$. The altitude and speed controllers generate commands for rate of climb $\mathrm{ROC}_c$ and airspeed rate $\dot v_{{\mathrm{TAS}}_c}$, which are used by the total energy control system (TECS) to compute the commanded pitch $\theta_c$ and throttle $\delta_{T_c}$. In the inner-loop layer, the commanded attitude is processed through a reference model to generate the desired response behavior of the UAV along each axis, producing the reference attitude 
$\Theta_r$ and its corresponding time derivatives $\dot \Theta_r$, $\ddot \Theta_r$. The reference $\Theta_r$ serves as input for the attitude controller block, which implements the proposed robust sliding-mode controller of Section \ref{sec:rob_controller}. After adding the trim values, the autopilot outputs the deflections $\delta$ and the throttle $\delta_{T}$ which are then applied to the implemented dynamic equations (\ref{num:rigid_body_dynamics_explicit}) and a jet turbine model that generates the thrust $T$. The autopilot is run at a fixed sampling rate of \SI{50}{Hz}, while the flight dynamics are numerically integrated using the variable-step solver LSODA \cite{petzold1983automatic} implemented in SciPy. We consider no wind effects or sensor noise.

The aerodynamic damage is introduced at 
\(
t_d
=
\SI{55}{s}
\)
by modifying the aircraft's stability and control derivatives according to the damage parameters shown in Table \ref{tab:damage_parameters}. 

To achieve robust control of the damaged UAV, we employ the RASMC approach of Theorem \ref{num:Theorem_SMC_adaptive}. A smoothing boundary layer \cite{slotine1991applied} is introduced to mitigate chattering. Consequently, the switching term $\text{sgn}(s)$ is replaced by a saturation function 
\(
\text{sat}(s)
=
\begin{bmatrix}
	\text{sat}(s_1) & \text{sat}(s_2) & \text{sat}(s_3)
\end{bmatrix}^T
\)
with $\text{sat}(s_i)$ being defined as
\begin{align}
	\text{sat}(s_i) =
	\begin{cases} 
		+1 & s_i > \sigma_i \\
		\frac{s_i}{\sigma_i} & |s_i| \le \sigma_i \\
		-1 & s_i < -\sigma_i
	\end{cases}, \quad i = 1,2,3
\end{align}
and $\sigma_i>0$ denoting the boundary layer width. The introduction of the boundary layer reduces chattering, but as a trade-off, $s_i$ cannot be driven exactly to zero anymore and is instead confined to the domain $|s_i| \le \sigma_i$. However, this implies that the adaptive gains permanently increase as the adaptation (\ref{num:adaptation_law}) directly depends on $|s_i|$. Therefore, we slightly modify the adaption to
\begin{align}
	\dot{k}_i =
	\begin{cases}
		(1-B_{ii})
		\frac{1}{\gamma_i}
		|s_i|
		& k_i < k_{i,d}
		\land |s_i|>\sigma_i,\\
		0,
		& \text{else},
	\end{cases}
	\label{num:adaptation_law_mod}
\end{align}
in order to work with the boundary layer. We choose initial positive values for the gains $\forall i\in\{1,2,3\} \colon k_i(t_0)=1$.

To calculate suitable upper bounds $k_d$ for the adaptive gains according to (\ref{num:equ_sys_adapt_gain_bounds}), we chose $a_i=50$, $B_{ii}=0.05$, $B_{ij}=0.005$ sufficient high to account for the uncertainties with values obtained by trial and error. Subsequently, the boundary layer widths are tuned to minimize chattering with tuned values given by $\forall i\in\{1,2,3\} \colon \sigma_i =1$.

Finally, we define the error dynamics of the sliding surface (\ref{num:def_sliding_surface}) based on the matrix $\Lambda=30I_{3\times3}$.

The resulting performance of the proposed robust control approach is visualized in Fig. \ref{fig:num_exampl_sim_results}. In Fig. \ref{fig:num_exampl_sim_results_a} the attitude tracking performance before and after the onset of damage is illustrated. After damage occurrence, only a small degradation in tracking accuracy is observed. These deviations are primarily attributed to discretization effects and the boundary layer implementation. The boundary layer effectively mitigates chattering in the signals of the control inputs as shown by Fig. \ref{fig:num_exampl_sim_results_c}, but has the drawback of reducing control accuracy.  Fig. \ref{fig:num_exampl_sim_results_b} depicts the outer-loop control variables. Despite the significant degradation of aerodynamic effectiveness, the aircraft remains stable throughout the simulation. In particular, altitude, airspeed, and course are maintained within tight bounds, demonstrating that the higher-level flight objectives remain achievable under the considered damage scenario. At the onset of damage, clear compensatory behavior of the proposed robust controller can be seen in Fig. \ref{fig:num_exampl_sim_results_c}. Notably, increased deflections are commanded to counteract the introduced aerodynamic perturbations. In addition, thrust is elevated to compensate for the increased aerodynamic drag. The sliding variables and adaptive gains are shown in Figs. \ref{fig:num_exampl_sim_results_d} and \ref{fig:num_exampl_sim_results_e}. On the onset of damage, the sliding variables $|s_1|$ and $|s_3|$ exit the boundary layer, as the gains are insufficient to compensate for the induced effects. This leads to an adaption of the gains $k_1$ and $k_3$, causing the sliding variables to converge back to the boundary layer. For the gain $k_2$, adaptation is observed immediately after the start of the simulation, but not at the onset of damage. A plausible explanation is that the nominal model does not fully capture certain aspects of the baseline flight dynamics. Consequently, the RASMC compensates for these discrepancies from the beginning by increasing the effective control effort for pitch regulation. This initial adjustment for $k_2$ is sufficiently strong, so additional effects introduced by the damage do not necessitate further increase in the control gain, as evidenced by $s_2$ remaining within the boundary layer after the onset of damage.

Finally, the controller’s responses demonstrate that it effectively compensates for damage-induced disturbances, ensuring stability and preserving tracking performance.

\section{Conclusions}

A robust adaptive sliding mode control (RASMC) approach is proposed that is capable of handling aerodynamic coefficient perturbations as well as partial loss of control effectiveness. The adaptation mechanism compensates for damage-induced effects while maintaining low control gains under nominal conditions. Lyapunov-based stability analysis guarantees tracking error convergence and allows specifying bounds on the uncertainties under which this convergence is ensured. Simulation results demonstrate the effectiveness of the method in maintaining stable and accurate tracking despite damage. Future work considers the inclusion of actuator dynamics and actuator saturation limits to further enhance the practical applicability of the approach.


\addtolength{\textheight}{-12cm}   






\bibliographystyle{IEEEtran}
\bibliography{mybibfile}

\end{document}